\documentclass{amsart}
\usepackage{tikz-cd}
\usepackage{latexsym,amssymb}
\usepackage[english]{babel}
\usepackage{bm}
\usepackage{amsfonts, amsthm}
\usepackage{amsmath}
\usepackage{mathrsfs}
\usepackage{multirow}
\usepackage{bm}
\usepackage{pgf,tikz}
\usepackage{graphicx}
\usepackage[caption=false]{subfig}

\usepackage{tikz-cd}
\usetikzlibrary{graphs}
\usetikzlibrary{arrows.meta}
\usetikzlibrary{shapes.geometric}

\usepackage{latexsym,amssymb,amsfonts, amsthm, amsmath}
\usepackage[english]{babel}
\usepackage{bm}
\usepackage{mathrsfs}
\usepackage{algpseudocode}
\usepackage{algorithm}

\algnewcommand{\IIf}[1]{\State\algorithmicif\ #1\ \algorithmicthen}
\algnewcommand{\EndIIf}{\unskip\ \algorithmicend\ \algorithmicif}

\newtheorem{theorem}{Theorem}
\newtheorem{lemma}{Lemma}
\newtheorem{corollary}{Corollary}

\theoremstyle{definition}

\newtheorem{remark}{Remark}

\def\F{\mathbb{F}}

\title[Upper bounds on the rate of linear $q$-ary $k$-hash codes]{Upper bounds on the rate \\ of linear $q$-ary $k$-hash codes}

\date{}
\author[S.  Della Fiore]{Stefano Della Fiore}
\author[M.  Dalai]{Marco Dalai}
\address{DI, Universit\`a degli Studi di Salerno, Via Giovanni Paolo II 132, 84084 Fisciano, Italy}
\email{sdellafiore@unisa.it}
\address{DII, Universit\`a degli Studi di Brescia,  Via Branze~43, I~25123 Brescia, Italy}
\email{marco.dalai@unibs.it}
\subjclass[2010]{68R05, 11T71}
\keywords{perfect hashing,  linear codes,  hyperplane coverings,  combinatorial coding theory}
\begin{document}
\begin{abstract}
This paper presents new upper bounds on the rate of linear $k$-hash codes in $\mathbb{F}_q^n$, $q\geq k$, that is, codes with the property that any $k$ distinct codewords are all simultaneously distinct in at least one coordinate.
\end{abstract}

\maketitle

\section{Introduction}

A $q$-ary code $C$ of length $n$ is a subset of $\{0,1,\ldots, q-1\}^n$.  We denote the rate of a $q$-ary code $C$ of length $n$ as 
$$R = \frac{1}{n} \log_q |C|\,.$$
Let $q \geq k \geq 3$ and $n \geq 1$ be integers, and let $C$ be a $q$-ary code of length $n$ with the property that for any $k$ distinct elements (codewords) we can find a coordinate in which they all differ. A subset $C$ with this property is called $(q,k)$-hash code of length $n$. In particular, $(q,3)$-hash codes are known as $q$-ary trifferent codes (or just trifferent codes when $q=3$).  The problem of finding upper and lower bounds for the maximum size of $(q,k)$-hash codes is a fundamental problem in theoretical computer science and information theory. It appears, as the name suggests, in the study of families of perfect hash functions and in the study of the zero-error capacity of some discrete channels with list decoding, see \cite{fredman-komlos, korner-marton, arikan94, blackburn1998optimal, della2022improved} for more details (see also \cite{bhandari-2022} for a related problem).

An elementary double counting argument, as shown in \cite{korner-marton}, gives the following bound on the cardinality of $(q,k)$-hash codes:
\begin{equation}\label{eq:recursivebound}
    |C| \leq  (k-1) \left(\frac{q}{k-1}\right)^n \text{ for every } q \geq k \geq 3\,.
\end{equation}
In 1984 Fredman and Koml\'os \cite{fredman-komlos} improved the bound in \eqref{eq:recursivebound} for every $q = k \geq 4$ and sufficiently large $n$, obtaining the following result:
\begin{equation}\label{eq:fredmanKomlos}
    |C| \leq \left((q-k+2)^{q^{\underline{k-1}} / {q^{k-1}}}\right)^{n+o(n)}\,,
\end{equation}
where $q^{\underline{k-1}} = q (q-1) \cdots (q-k+2)$.
Fredman and Koml\'os also provided, using standard probabilistic methods, the following lower bound:
\begin{equation}\label{eq:lbfk}
    |C| \geq \left(\left(1 - \frac{q^{\underline{k}}}{q^k}\right)^{-1/(k-1)}\right)^{n+o(n)}\,.
\end{equation}

A generalization of the upper bounds given in equations \eqref{eq:recursivebound} and \eqref{eq:fredmanKomlos} was derived by K\"{o}rner and Marton \cite{korner-marton} in the form
\begin{equation}\label{eq:kornerMarton}
    |C| \leq \min_{0\leq j\leq k-2} \left(\left(\frac{q-j}{k-j-1}\right)^{q^{\underline{j+1}} / {q^{j+1}}}\right)^{n+o(n)}\,.
\end{equation}

In 1998, Blackburn and Wild \cite{blackburn1998optimal} (see also \cite{bassalygo1997}) improved the bound of K\"orner and Marton for every $q$ sufficiently larger than $k$, proving
\begin{equation}\label{eq:blackWild}
    |C| \leq (k-1) q^{\lceil \frac{n}{k-1} \rceil}\,.
\end{equation}

Much effort has been spent during the years to refine the bounds given in \eqref{eq:recursivebound}, \eqref{eq:fredmanKomlos} and \eqref{eq:kornerMarton}. See for example \cite{della2022maximum, kurz2024trifferent} and the recent breakthrough \cite {bhandari2024improved} for case of $q=k=3$, \cite{arikan1994, arikan94, dalai2019improved} in case of $q=k=4$,  \cite{costa2021new} in case of $q=k=5,6$, and \cite{korner-marton, guruswami2022beating, della2021new, della2022improved} for $q \geq k \geq 5$. However, to the best of our knowledge, for $q$ sufficiently larger than $k$ no improvements over the upper bound \eqref{eq:blackWild} have been obtained. 

For the sake of completeness, it is worth noting that some improvements on the lower bound given in \eqref{eq:lbfk} on the largest size of $(q,k)$-hash codes have been recently obtained in~\cite{xing2023beating} for $q \in [4,15]$,  all integers between $17$ and $25$, and for a sufficiently large $q$. For $q=k=3$ the best known lower bound is due to  K\"orner and Marton~\cite{korner-marton} and it is equal to $(9/5)^{n/4 +o(1)}$.

In contrast, no exponential improvement has been made on the simple bound given in \eqref{eq:recursivebound} for $q= k = 3$ (see for example \cite{costa2021gap} for a discussion on the intractability of this problem even with some recent powerful techniques such as the slice-rank method). Until recently, only improvements on the multiplicative constant had been obtained (see \cite{della2022maximum, kurz2024trifferent}), while a polynomial improvement has been obtained in a beautiful recent work by Bhandhari and Kheta \cite {bhandari2024improved}.

However, if we restrict the codes to be linear, i.e., we require $C$ to be a linear subspace of $\mathbb{F}_3^n$, exponential improvements have been recently obtained. Upper bounds on the rate of linear trifferent codes have been considered first in \cite{Pohoata-Zakharov-2022}, where it was proved that for some $\epsilon>0$,
\begin{align}
|C| & \leq 3^{\left(\frac{1}{4}-\epsilon\right)n} \label{eq:lin_triff_first} \\ &\approx 1.3161^n\,. \nonumber
\end{align}
This result was then improved in \cite{bishnoi-etal-2023}, where connections with minimal codes were used to show that
\begin{align}
|C| & \leq 3^{n/4.5516+o(n)} \label{eq:lin_triff_best} \\ &\approx 1.2731^{n}, \nonumber
\end{align}
where the constant $1.2731$ is the numerical solution of an equation that we will explain in Section~\ref{sec:k=q=3}. The authors also showed that there exist linear trifferent codes of length $n$ and size  $\frac{1}{3}(9/5)^{n/4}$,  matching,  asymptotically in $n$,  the best known lower bound on trifferent codes (without the linearity constraint) obtained in \cite{korner-marton}.

We note that when $q$ is small compared to $k>3$, no linear $k$-hash codes of dimension $2$ exist. In particular, Blackburn and Wild \cite{blackburn1998optimal} showed that this is true for every $q \leq 2k - 4$ (so, in particular, the case $q=k>3$ is of no interest). The authors in \cite{ng2001k} improved this result for $k \geq 9$ showing that when $q$ is a square, $\sqrt{q} > 5$, no linear $k$-hash codes of dimension $2$ exist whenever $q \leq \left(\frac{k-1}{2}\right)^2$. Hence in these regimes, we know that linear $k$-hash codes in $\mathbb{F}_q^n$ are relatively \textit{simple objects} since their asymptotic rates are equal to zero.

In this paper we provide upper bounds on the rate of linear $k$-hash codes in $\mathbb{F}_q^n$ for general values of $q\geq k\geq 3$. These are the first known (non-trivial) such bounds. For $q=k=3$ they recover the best known result of \cite{bishnoi-etal-2023} given in equation~\eqref{eq:lin_triff_best}. Also, in the range of $q$ much larger than $k$, they improve the general bound of equation \eqref{eq:blackWild} (in terms of code rate as $n\to \infty$).

\section{A simpler proof for $q=k=3$}
\label{sec:k=q=3}
In this section, we present a re-derivation of \eqref{eq:lin_triff_first} and \eqref{eq:lin_triff_best} by a straightforward application of a method already presented\footnote{The main tool used is essentially some form of Jamison's bound both here and in \cite{Pohoata-Zakharov-2022, bishnoi-etal-2023}. The difference is mainly a matter of how this tool can be combined with other ideas in coding theory.} in \cite{calderbank-etal-1993}. This simpler approach is the starting point for the extension to the general case $q\geq k \geq 3$ which is then presented in the next section.

The idea is to modify slightly the proof of \cite[Corollary 2.1]{calderbank-etal-1993}. The main tool to be used is Jamison's bound \cite{jamison-1977}.
\begin{lemma}[\cite{jamison-1977}]
Let $q\geq 3$ be a prime power, and let $\mathcal{H}$ be a set of hyperplanes in $\F_q^m$ whose union is $\F_q^m\setminus\{0\}$. Then $|\mathcal{H}|\geq (q-1)m$.
\end{lemma}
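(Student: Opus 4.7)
The plan is to apply the polynomial method, following Jamison's original approach. Since $0$ lies on none of the hyperplanes of $\mathcal{H}$, each $H \in \mathcal{H}$ can be written in the form $H = \{x \in \mathbb{F}_q^m : \ell_H(x) = 1\}$ for a suitable $\mathbb{F}_q$-linear form $\ell_H$ on $\mathbb{F}_q^m$ (take any affine equation for $H$ and normalize the constant term). I would then consider the polynomial
\[ P(x) = \prod_{H \in \mathcal{H}} \bigl(\ell_H(x) - 1\bigr), \]
which has total degree exactly $|\mathcal{H}|$. By construction $P$ vanishes on $\bigcup_{H \in \mathcal{H}} H = \mathbb{F}_q^m \setminus \{0\}$, while $P(0) = (-1)^{|\mathcal{H}|} \neq 0$.

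Next I would compare $P$ to the delta-function-at-$0$ polynomial
\[ Q(x) = \prod_{j=1}^m \bigl(1 - x_j^{q-1}\bigr), \]
which, by Fermat's little theorem applied coordinatewise, satisfies $Q(0) = 1$ and $Q(x) = 0$ for every $x \in \mathbb{F}_q^m \setminus \{0\}$. Thus $P$ and $P(0) \cdot Q$ agree as functions on $\mathbb{F}_q^m$. Using the standard fact that the evaluation map identifies $\mathbb{F}_q[x_1,\ldots,x_m]/(x_1^q - x_1,\ldots,x_m^q - x_m)$ with the ring of all $\mathbb{F}_q$-valued functions on $\mathbb{F}_q^m$, the two polynomials $P$ and $P(0) \cdot Q$ must be congruent modulo the ideal $(x_1^q - x_1,\ldots,x_m^q - x_m)$.

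The conclusion is now a degree comparison. Reducing $P$ modulo this ideal by repeatedly substituting $x_j^q \mapsto x_j$ never increases the total degree, so the reduced representative $\bar P$ satisfies $\deg \bar P \leq \deg P = |\mathcal{H}|$. On the other hand, $P(0) \cdot Q$ is already reduced (each variable appearing to degree at most $q-1$), so it is the unique reduced representative of the common function, and it has total degree $m(q-1)$. Comparing gives $|\mathcal{H}| \geq m(q-1)$, as claimed. The only mildly subtle point I anticipate is verifying carefully that the reduction $x_j^q \mapsto x_j$ is degree non-increasing and that $P(0) \cdot Q$ is the unique reduced form of the function it defines; both are standard, but they are exactly what is needed to make the degree estimate sharp enough to yield the Jamison bound with the correct constant $(q-1)m$.
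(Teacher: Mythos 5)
The paper does not prove this lemma; it cites Jamison's 1977 paper and uses the bound as a black box. So there is no ``paper's own proof'' to compare against, and I am evaluating your argument on its own terms.

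Your proof is correct. Writing each hyperplane through a nonzero point as $\{\ell_H = 1\}$, forming $P=\prod_H(\ell_H - 1)$ of degree exactly $|\mathcal{H}|$ (the top part $\prod_H \ell_H$ is a product of nonzero linear forms, hence nonzero), observing $P$ agrees as a function with $P(0)\prod_j(1-x_j^{q-1})$, and then comparing degrees via the unique reduced representative modulo $(x_1^q-x_1,\ldots,x_m^q-x_m)$ is a clean and complete argument. The two points you flag as ``mildly subtle'' are indeed the load-bearing ones, and both hold: each monomial $x_j^a$ reduces to $x_j^b$ with $b\le a$, so the reduction of any polynomial has total degree at most the original; and since reduced polynomials (degree $<q$ in each variable) form a $q^m$-dimensional space mapping bijectively onto functions $\F_q^m\to\F_q$, the reduced representative of a function is unique, so the reduction of $P$ must equal $P(0)\prod_j(1-x_j^{q-1})$, of degree $m(q-1)$. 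This is the now-standard polynomial-method route to Jamison's bound (closely related to the Alon--F\"uredi circle of ideas) rather than Jamison's original argument, but it establishes the statement as needed.
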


Let $C$ be a linear trifferent code of dimension $m$ and length $n$. Let $G$ be the $m\times n$ generator matrix, let $d$ be the minimum Hamming distance of the code, and let $x$ be a codeword of weight $d$. Finally call $u\in \mathbb{F}_3^m$ the information vector associated to $x$, that is, assume $x=uG$, and also assume without loss of generality (by appropriate sorting and re-scaling of the columns of $G$) that $x$ has $0$s in the last $n-d$ coordinates and $1$s in the first $d$ coordinates. Then, since the code is trifferent, any codeword different from $0$ and $x$ must have a coordinate equal to $2$ among the first $d$ ones. So, if we call $g_i$ the $i$-th column of $G$, the $d$ affine subspaces defined by
\begin{equation*}
H_i=\{v\in \F_3^m : v\cdot g_{i}=2\}\,,i=1,\ldots, d
\end{equation*}
cover the set $\F_3^m$ with the exception of $0$ and $u$. Adding another subspace $H_{d+1}=\{v\in \F_3^m : v\cdot g_{1}=1\}$ we also covers $u$, still leaving out $0$. By Jamison's bound, $d+1\geq 2 m$. In terms of rates and relative minimum distance $\delta=d/n$ this becomes
\begin{equation}
R\leq \frac{1}{2}\delta + o(1)\,.
\label{eq:bound_R_delta}
\end{equation}
The rest comes from known upper bounds on the minimum Hamming distance of codes. Using the Plotkin bound 
$$
\delta \leq \frac{2}{3}(1-R)+o(1)
$$
gives $R\leq (1-R)/3 + o(1)$, which is asymptotically $R\leq 1/4$, essentially equivalent to\footnote{Strictly speaking, to obtain the positive $\epsilon$ in \eqref{eq:lin_triff_first} we need the fact that the Plotkin bound is not tight at positive rates.} \eqref{eq:lin_triff_first}. The stronger bound \eqref{eq:lin_triff_best} is obtained instead by using the best known bound on $\delta$, which is the linear programming bound of \cite{mceliece-et-al-1977} adapted to $q$-ary codes \cite{aaltonen1990new}, defined implicitly in $\delta$ by the inequality
\begin{equation}
R \leq H_q\left(\frac{1}{q}\left(q - 1 - (q-2)\delta - 2\sqrt{(q-1)\delta(1 - \delta)}\right)\right)
\label{eq:MRRW}
\end{equation}
with $q=3$ and 
\begin{equation}\label{eq:Hq}
H_q(t) = t \log_q(q-1) - t\log_q t - (1-t)\log_q(1-t)\,.
\end{equation}
The bound on $R$ is found by combining \eqref{eq:bound_R_delta} and \eqref{eq:MRRW}, which means solving  \eqref{eq:MRRW} for equality with $\delta=2R$.

\section{General Case $q\geq k\geq 3$}
Our extension to general $q\geq k\geq 3$ is essentially based on the idea of iterating the technique used for $q=k=3$. To do this, we will need to consider a generalized notion of distance among tuples of codewords and a generalization of Jamison's bound to multiple coverings.
The latter is given by this result of Bruen \cite{bruen-1992}.
\begin{lemma}[\cite{bruen-1992}]
\label{lemma:bruen}
Let $\mathcal{H}$ be a multiset of hyperplanes in $\F_q^m$. If no hyperplane in $\mathcal{H}$ contains $0$ and each point in $\F_q^m\setminus \{0\}$ is covered by at least $t$ hyperplanes in $\mathcal{H}$, then
$$
|\mathcal{H}|\geq (m+t-1)(q-1)\,.
$$
\end{lemma}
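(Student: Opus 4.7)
The natural strategy is to generalize the polynomial method that established Jamison's theorem, treating the $t$-fold covering condition as a multiplicity condition on an auxiliary polynomial. Write each $H_i\in\mathcal{H}$ as $\{x : \ell_i(x)=0\}$ for an affine form $\ell_i$ with $\ell_i(0)\neq 0$, and set $F(x)=\prod_{i=1}^{N}\ell_i(x)$, where $N=|\mathcal{H}|$. Then $F$ has degree $N$, satisfies $F(0)=\prod_i\ell_i(0)\neq 0$, and vanishes at every point of $\F_q^m\setminus\{0\}$. Because the $\ell_i$ are affine and at least $t$ of them vanish at every nonzero $x$, the Taylor expansion of $F$ around such an $x$ is divisible, in the shift variable, by a product of $t$ linear factors; equivalently, $F$ has a zero of multiplicity $\geq t$ at every nonzero point of $\F_q^m$.

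The classical reduction modulo the ideal $I=(x_1^q-x_1,\ldots,x_m^q-x_m)$ gives that any polynomial vanishing on $\F_q^m\setminus\{0\}$ but nonzero at $0$ has degree at least $m(q-1)$; applied directly to $F$ this recovers only Jamison's bound $N\geq m(q-1)$. To bring in the multiplicity, I would apply the same reduction to each Hasse derivative $D^{(\alpha)} F$ with $|\alpha|<t$: all such derivatives also vanish on $\F_q^m\setminus\{0\}$, and whenever $D^{(\alpha)} F(0)\neq 0$ the reduction argument yields $N\geq m(q-1)+|\alpha|$. Optimizing naively over $|\alpha|\leq t-1$ gives $N\geq m(q-1)+t-1$, short of the target by a factor of $(q-1)$ in the second term.

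The main obstacle is precisely this $(q-1)$ gap, and overcoming it is the heart of the proof. The idea I would pursue is to replace the single quotient $\F_q[x]/I$ by the thicker jet-style quotient $\F_q[x]/I^{t}$ (or, equivalently, the space of functions $\F_q^m\to\F_q$ equipped with Hasse jets of order $<t$), whose reduced-polynomial dimension exceeds that of $\F_q[x]/I$ by an amount of order $(t-1)(q-1)m$. The coefficient-extraction step then upgrades the naive bound into the sharpened statement: any polynomial vanishing with multiplicity at least $t$ on $\F_q^m\setminus\{0\}$ and nonzero at $0$ has degree at least $(m+t-1)(q-1)$. This is, in essence, a punctured-grid multiplicity Nullstellensatz in the spirit of Ball--Serra, specialized to the single excluded point $0\in\F_q^m$. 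Once that degree estimate is in hand, applying it to $F$ gives $N\geq (m+t-1)(q-1)$, which is the content of the lemma.
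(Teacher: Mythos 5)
The paper does not prove this lemma: it is quoted from Bruen \cite{bruen-1992} and used as a black box, so there is no internal proof to compare your attempt against.

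On its own terms, your attempt correctly reduces the lemma to a polynomial degree bound and then stalls at exactly the point where the work must be done. The setup is fine: take $F=\prod_i\ell_i$, note $F(0)\neq 0$ and that $F$ vanishes with multiplicity at least $t$ at every nonzero point of $\F_q^m$, so it suffices to show that any such polynomial has degree at least $(m+t-1)(q-1)$. You also correctly diagnose that applying the one-point Jamison reduction to individual Hasse derivatives $D^{(\alpha)}F$ with $|\alpha|\le t-1$ only yields $N\ge m(q-1)+t-1$, short by a factor of $q-1$ in the multiplicity term. But the step proposed to close this gap --- passing to a ``jet-style quotient $\F_q[x]/I^t$'' whose dimension allegedly exceeds that of $\F_q[x]/I$ ``by an amount of order $(t-1)(q-1)m$,'' followed by an unspecified ``coefficient-extraction step'' --- is not an argument. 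The ideal $I^t$, with $I=(x_1^q-x_1,\ldots,x_m^q-x_m)$, is not the ideal of polynomials vanishing with multiplicity at least $t$ on $\F_q^m$, so it is unclear which quotient you intend; the dimension you ascribe to it is asserted rather than computed; and the coefficient-extraction step that is supposed to turn a dimension count into the degree bound is never described. The degree estimate you invoke at the close of the paragraph --- a punctured multiplicity Nullstellensatz with the exact constant $(m+t-1)(q-1)$ --- \emph{is} the content of Bruen's theorem, so as written the proposal is circular precisely at its central step. To complete it you would need to actually prove that estimate, for instance by reproducing Bruen's argument or by carefully carrying out a multiplicity version of the Alon--F\"uredi/Ball--Serra machinery with the correct constants, rather than asserting that such a statement should exist.
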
 

We need to introduce the following technical lemma.

\begin{lemma}
Let $C$ be a linear code of dimension $m$ in $\F_q^d$, let $x_1, x_2, \ldots, x_\ell \in C$ be $\ell \leq q-1 \leq m$ linearly independent codewords which are all pairwise distinct in each coordinate and contain no zeros, and let $C'$ be a subcode of $C$ of dimension $m - \ell$ that intersects trivially, only in the origin, the subspace spanned by the $x_i$'s. For $i=1,\ldots, d$, set $S_i=\F_q\setminus\{0,x_{1,i},x_{2,i},\ldots,x_{\ell,i}\}$.

Assume that, for each $c\in C' \setminus \{0\}$, we have $c_i\in S_i$ for at least $t$ values of $i$. Then 
\begin{equation}
m - \ell \leq \frac{q-\ell-1}{q-1}d-t+1\,.
\label{eq:m_vs_d&t}
\end{equation}
\label{lem:multicover}
\end{lemma}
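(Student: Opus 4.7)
The plan is to imitate the hyperplane-covering argument used for the $q=k=3$ case, but this time inside the information space of $C'$ and with multiplicities, so that Lemma~\ref{lemma:bruen} plays the role of Jamison's bound.

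First I would set coordinates adapted to the decomposition $C = \langle x_1,\ldots,x_\ell\rangle \oplus C'$, which is a genuine internal direct sum by the trivial-intersection hypothesis. Writing $G$ for the resulting generator matrix of $C$ and $G'$ for the $(m-\ell)\times d$ submatrix whose rows generate $C'$, every codeword of $C'$ can be written uniquely as $c = wG'$ with $w \in \F_q^{m-\ell}$, and $c_i = w\cdot g'_i$ where $g'_i$ denotes the $i$-th column of $G'$.

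Next, for each coordinate $i\in\{1,\ldots,d\}$ and each $\alpha \in S_i$, I would form the affine subspace
\[
H_{i,\alpha} = \{w\in\F_q^{m-\ell} : w\cdot g'_i = \alpha\}\,.
\]
When $g'_i\neq 0$ this is an affine hyperplane not containing the origin (since $\alpha\neq 0$, because $0\notin S_i$); when $g'_i = 0$ it is empty and will simply be discarded. Collecting the non-empty ones into a multiset $\mathcal{H}$ gives $|\mathcal{H}|\leq d(q-\ell-1)$, as $|S_i|=q-\ell-1$ for each $i$. The hypothesis on $C'$ translates immediately: for every nonzero $w\in\F_q^{m-\ell}$, setting $c = wG'$, the number of indices $i$ with $w\in H_{i,c_i}$ equals the number of $i$ with $c_i\in S_i$, which is at least $t$. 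Hence every nonzero point of $\F_q^{m-\ell}$ is covered at least $t$ times by $\mathcal{H}$.

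Finally, applying Lemma~\ref{lemma:bruen} to $\mathcal{H}$ in $\F_q^{m-\ell}$ yields
\[
(m-\ell+t-1)(q-1)\leq|\mathcal{H}|\leq d(q-\ell-1)\,,
\]
which rearranges to \eqref{eq:m_vs_d&t}. I do not expect any genuine obstacle: the only delicate point is the bookkeeping for coordinates with $g'_i=0$ (equivalently, coordinates on which every codeword of $C'$ vanishes), but such coordinates contribute nothing to either side of the covering count, so they can be dropped without affecting the argument.
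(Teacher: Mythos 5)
Your proposal is correct and follows essentially the same route as the paper: form the multiset of affine hyperplanes $H_{i,\alpha}$ in the information space $\F_q^{m-\ell}$ of $C'$, observe none contains the origin and each nonzero point is covered at least $t$ times, then apply Bruen's lemma and rearrange. The only addition is your explicit handling of coordinates with $g'_i = 0$, which the paper leaves implicit; as you correctly note, such coordinates can never contribute to the covering count (since $c_i = 0 \notin S_i$ there) and dropping them only strengthens the inequality $|\mathcal{H}| \leq d(q-\ell-1)$, so the argument goes through unchanged.
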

\begin{proof}
Let $G$ be the (full rank) $(m - \ell) \times d$ generator matrix of the subcode $C'$ and $g_i$ its $i$-th column. Consider the hyperplanes 
$$
H_{i,b}=\{v\in \F_q^{m-\ell} \mid v\cdot g_i = b\}\,,\quad i=1,\ldots,d\,,\  b\in S_i\,.
$$
Since $|S_i|=q-(\ell+1)$, these are $(q-\ell-1)d$ hyperplanes none of which contains the zero vector. On the other hand, by assumption, each $v \in \mathbb{F}_q^{m-\ell} \setminus \{0\}$ is covered at least $t$ times by those hyperplanes. Therefore from Lemma~\ref{lemma:bruen} we then have
$$
(q-\ell-1)d\geq (m - \ell +t-1)(q-1)
$$
which is equivalent to the statement.
\end{proof}

We are now ready to state our main result. 

\begin{theorem}\label{thm:main}
Let $C$ be a linear $k$-hash code in $\F_q^n$ of rate $R=m/n$ and relative distance $\delta$. Then,
\begin{equation}
R\leq \frac{\delta}{\sum_{i=1}^{k-2}\frac{(q-1)^i}{(q-2)^{\underline{i}}}} + o(1)
\label{eq:main_th}
\end{equation}
where $(q-2)^{\underline{i}}=(q-2)(q-3)\cdots(q-i-1)$.
\label{th:main_th}
\end{theorem}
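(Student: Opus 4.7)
The strategy is to iterate the ``pivot codeword + covering bound'' scheme of the $q=k=3$ proof by building a chain of linearly independent codewords $x_1,\ldots,x_{k-2}\in C$ together with a shrinking sequence of coordinate sets $T_1 \supseteq T_2 \supseteq \cdots \supseteq T_{k-2}$, where
$T_j = \{i : 0,x_{1,i},\ldots,x_{j,i}\text{ are pairwise distinct and nonzero}\}$, and we write $d_j=|T_j|$. At each intermediate step, Lemma~\ref{lem:multicover} produces a recursive bound on $d_{j+1}$ in terms of $d_j$ and $m$, and a final application of the same lemma pins down $m$.

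Concretely, I would take $x_1$ to be a minimum-weight codeword (weight $d_1=d=\delta n$) rescaled so that $x_1|_{T_1}=\mathbf{1}$. Given $x_1,\ldots,x_j$ and $T_j$, I would fix a complement $C^{(j)}$ of $\langle x_1,\ldots,x_j\rangle$ in $C$ and pick $x_{j+1}\in C^{(j)}\setminus\{0\}$ minimizing the \emph{$S$-weight} $w_j(c):=|\{i\in T_j: c_i\in S_i^{(j)}\}|$, where $S_i^{(j)}=\F_q\setminus\{0,x_{1,i},\ldots,x_{j,i}\}$; then $T_{j+1}$ is the $S$-support of $x_{j+1}$ inside $T_j$, of size $d_{j+1}$. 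Applying Lemma~\ref{lem:multicover} to $C|_{T_j}$ (of length $d_j$) with $\ell=j$ and $t=d_{j+1}$ (the minimum $S$-weight in $C^{(j)}\setminus\{0\}$, by construction), I would obtain the recursion
\[
d_{j+1}\ \leq\ \frac{q-j-1}{q-1}\,d_j\ -\ (m-j-1)\qquad\text{for }j=1,\ldots,k-3.
\]

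Having built $x_1,\ldots,x_{k-2}$, I would apply Lemma~\ref{lem:multicover} once more with $\ell=k-2$ and $t=1$. The condition $t\geq 1$ is precisely what uses the $k$-hash property of $C$: for any $c\in C\setminus\langle x_1,\ldots,x_{k-2}\rangle$, the $k$-hash property applied to the $k$ distinct codewords $\{0,x_1,\ldots,x_{k-2},c\}$ produces a coordinate where they are all distinct, and such a coordinate must lie in $T_{k-2}$ with $c_i\in S_i^{(k-2)}$. This yields $m-(k-2)\leq \tfrac{q-k+1}{q-1}\,d_{k-2}$. Unrolling the recursion, substituting into the final inequality, and collecting terms, I expect to arrive at
$m\bigl(1+\sum_{i=2}^{k-2}\gamma_i\bigr)\leq \gamma_1\, d+O(1)$ with $\gamma_i=(q-i-1)^{\underline{k-i-1}}/(q-1)^{k-i-1}$. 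The algebraic identity $\gamma_1\cdot A_{k-2}=1+\sum_{i=2}^{k-2}\gamma_i$, where $A_{k-2}:=\sum_{j=1}^{k-2}(q-1)^j/(q-2)^{\underline{j}}$ is the denominator appearing in the theorem, then follows by a term-by-term reindexing and translates the inequality into $R\leq\delta/A_{k-2}+o(1)$.

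The main obstacle I foresee is the dimension loss when restricting $C$ to $T_j$: the kernel of $\pi_j\colon C\to C|_{T_j}$ consists of codewords supported in $[n]\setminus T_j$, which need not be trivial and would introduce extra $\dim\ker\pi_j$ terms in the recursion. Showing that this loss is absorbed into $o(n)$ — e.g.\ via a Singleton-type bound on $\ker\pi_j$ together with the fact that $d_j$ stays of order $n$ — is the delicate technical point. A secondary subtlety is that the minimum $S$-weight $d_{j+1}$ could vanish at some intermediate step, halting the iteration prematurely; one must then either argue that the partial bound obtained at that step already implies the theorem (e.g.\ by choosing the complement $C^{(j)}$ more carefully) or perturb the construction to guarantee $d_{j+1}>0$ throughout.
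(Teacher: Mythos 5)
Your approach is essentially the paper's: build $x_1,\ldots,x_{k-1}$ iteratively by applying Lemma~\ref{lem:multicover} to successively punctured codes, with the $k$-hash property supplying $t\geq 1$ at the final step. The recursion you derive and the closing algebraic identity (relating $\gamma_1 A_{k-2}$ to $1+\sum_{i\geq 2}\gamma_i$) are both correct and match what the paper obtains after unrolling.

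However, the ``main obstacle'' you flag --- possible dimension loss under $\pi_j\colon C\to C|_{T_j}$ --- is in fact the one point you need the $k$-hash hypothesis for, and your proposed fix (a Singleton-type bound on $\ker\pi_j$ absorbed into $o(n)$) is the wrong tool: it would not give a clean recursion. The correct resolution, which the paper uses, is that $\ker\pi_j=\{0\}$ \emph{exactly} because $C$ is $k$-hash. Indeed, if $y\neq 0$ vanished on $T_j$, then $0,x_1,\ldots,x_j,y$ are $j+2\leq k$ pairwise distinct codewords ($y\neq x_i$ since each $x_i$ is nonzero on $T_j$), and in every coordinate at least two of them coincide: on $T_j$, $y$ collides with $0$; on $T_{j-1}\setminus T_j$, some pair among $\{0,x_1,\ldots,x_j\}$ collides by definition of $T_j$; and so on down to $[n]\setminus T_1$, where $x_1$ collides with $0$. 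Extending this set arbitrarily to $k$ codewords (possible once $|C|\geq k$) contradicts the $k$-hash property. So each puncturing is injective, $C|_{T_j}$ has dimension $m$, and the restricted $x_1,\ldots,x_j$ stay linearly independent. The same argument kills your ``secondary subtlety'': $d_{j+1}=0$ would hand you a nonzero $x_{j+1}$ with $x_{j+1,i}\in\{0,x_{1,i},\ldots,x_{j,i}\}$ on all of $T_j$, yielding the same contradiction, so the iteration cannot stall. Once you replace the Singleton idea with this observation, your proof closes and coincides with the paper's (which phrases the argument contrapositively, assuming $R$ exceeds the bound and exhibiting $k$ non-hashed codewords, while you argue forward; the two are equivalent).
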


\begin{proof}
We will show that if the rate exceeds the claimed bound we can find, by means of an iterative process, a collection of $k$ codewords $\{0,x_1\ldots, x_{k-1}\}$ which do not satisfy the $k$-hash property. Figure \ref{fig:codewords} gives a graphical representation of the properties we will require for the codewords. We start with a codeword $x_1$ of minimum weight $d=\delta_1 n= \delta n$, where we assume without loss of generality that $x_1$ is non-zero in the first $d$ coordinates. Any set of $k$ codewords in $C$ which includes $0$ ad $x_1$ cannot satisfy the $k$-hash property in the last $n-d$ coordinates, so we can focus on the first $d$ coordinates and consider the punctured code, call it $C_{[d]}$. Note that puncturing is injective. Indeed, if two distinct codewords $y,y'\in C$ are equal in $[d]$, then the codewords $0,x_1,y-y'$ are all distinct and are not $3$-hashed. Hence, the code is not a $k$-hash code for any $k\geq 3$. This means that i) $C_{[d]}$ is also an $m$-dimensional subspace in $\F_q^d$ and ii) we can refer to \emph{codewords} without ambiguity as to whether we mean in $C$ or $C_{[d]}$.

We now want to select a codeword $x_2$ which is linearly independent of $x_1$ and matches either with $0$ or with $x_1$ in many coordinates. Consider thus the linear subspace of $C_{[d]}$ of dimension $m$. We now use Lemma~\ref{lem:multicover}. Take $t$ which contradicts \eqref{eq:m_vs_d&t} with $\ell=1$, that is such that
\begin{align*}
\frac{m-1}{n}> \frac{q-2}{q-1}\cdot\frac{d}{n}-\frac{t}{n}+\frac{1}{n}\,.
\end{align*}
Setting $\delta_2=t/n$, this means taking $\delta_2\in [0,1]$ such that
\begin{align*}
R > \frac{q-2}{q-1}\delta_1-\delta_2+o(1)\,.
\end{align*}

\begin{figure}[t]
\centering
\includegraphics{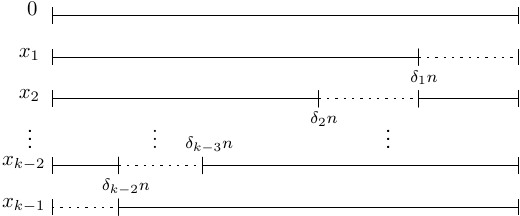}
\caption{Codewords used in the proof of Theorem \ref{th:main_th}. Each codeword collides with one of the previous codewords in each coordinate of the dotted part.}
\label{fig:codewords}
\end{figure}

Lemma~\ref{lem:multicover} then implies that there is a codeword $x_2$ linearly independent of $x_1$ such that $x_{2,i}\notin \{0,x_{1,i}\}$ for less than $\delta_2 n$ coordinates in $[1,\ldots,\delta_1 n]$. Assume without loss of generality that these are the first coordinates. Then, the three codewords $\{0,x_1,x_2\}$ are not $3$-hashed in any of the last $(1-\delta_2)n$ coordinates. This means that any $k$ codewords which include $0,x_1,x_2$ must be $k$-hashed in one of the first $\delta_2 n$ coordinates. Again we can restrict our attention on the punctured code $C_{[\delta_2 n]}$. Note that again the puncturing is injective, for the same reason already mentioned in the first iteration; if distinct codewords $y$ and $y'$ coincide over $[\delta_2 n]$, then $0,\ x_1,\ x_2,\ y-y'$ are $4$ distinct codewords which are not $4$-hashed (note that $x_1$ and $x_2$ cannot equal $y-y'$ since they are non-zero in $[\delta_2 n]$ while $y-y'$ is zero there). Thus $C_{[\delta_2 n]}$ has dimension $m$. Furthermore, $x_1$ and $x_2$ remain linearly independent also when restricted to the coordinates $[\delta_2 n]$. Indeed, if we assume by contradiction that in $[\delta_2 n]$ we have $x_2=\alpha x_1$ for some $\alpha \in \mathbb{F}_q \setminus \{0\}$, then the codeword in $C_{[\delta_1 n]}$ defined by $y=\alpha x_1$ coincides with $x_2$ in $[\delta_2 n]$ and so, by injectivity of our code restriction to $[\delta_2 n]$, necessarily $y=x_2$, which is impossible since $x_2$ and $x_1$ were chosen to be linearly independent in $[\delta_1 n]$. Therefore, since the subspace $C_{[\delta_2 n]}$ has dimension $m$, we can iterate our procedure invoking Lemma~\ref{lem:multicover} with $\ell=2$.

Continuing this way, at iteration $j$ we have $j$ linearly independent codewords $x_1,\ldots,x_j$ in $C_{[\delta_j n]}$ such that 
$$|\{0,x_{1,i},\ldots,x_{j,i}\}|\leq j \text { for all } i>\delta_{j}n\,,$$ and we find a $(j+1)$-th linearly independent codeword over the coordinates $[\delta_j n]$ which is $(j+1)$-hashed with the previous $j$ codewords and 0 only in the first $\delta_{j+1}n$ coordinates, where $\delta_{j+1}$ is chosen to satisfy
\begin{equation*}
R > \frac{q-j-1}{q-1}\delta_j-\delta_{j+1}+o(1) \,,
\end{equation*}
that is
\begin{equation}
\delta_{j+1}>\frac{q-j-1}{q-1}\delta_j-R.
\label{eq:j-th_iter}
\end{equation}
The restriction of the code to the first $\delta_{j+1}n$ coordinates is again injective and thus also preserves the linear independence of $x_1,\ldots,x_{j+1}$, because if one of those codewords was a linear combination of the other ones over $[\delta_{j+1}n]$ then it would coincide with the same linear combination taken over $[\delta_j n]$ which contradicts the linear independence of $x_1,\ldots,x_{j+1}$ over $[\delta_j n]$.

We iterate this for $j=1,\ldots,k-3$, finding $x_1,\ldots,x_{k-2}$ such that  
$$|\{0,x_{1,i},\ldots,x_{k-2,i}\}|\leq k-2 \text{ for all } i>\delta_{k-2}n$$ 
while 
$$|\{0,x_{1,i},\ldots,x_{k-2,i}\}|=k-1 \text{ for all } i\leq \delta_{k-2}n.$$
At this point we can find one last, linearly independent codeword $x_{k-1}$, such that 
$$
x_{k-1,i}\in \{0,x_{1,i},\ldots,x_{k-2,i}\} \,,\ \text{for all } i\leq \delta_{k-2}n
$$
if \eqref{eq:j-th_iter} is satisfied for $j=k-2$ with $\delta_{k-1}=0$, that is if
\begin{equation}
R > \frac{q-k+1}{q-1}\delta_{k-2}\,.
\label{eq:stop_iter}
\end{equation}
This gives us $k-1$ codewords $x_1,\ldots,x_{k-1}$ such that the $k$ codewords $\{0,x_1,\ldots,$ $x_{k-1}\}$ are not $k$-hashed in any coordinates.
The condition on $R$ can be obtained by using recursively equation~\eqref{eq:j-th_iter} in \eqref{eq:stop_iter} with initialization $\delta_1=\delta$. This leads to
\begin{equation*}
\frac{q-1}{q-k+1}R > \frac{(q-2)^{\underline{k-3}}}{(q-1)^{k-3}}\delta - R \sum_{j=0}^{k-4}\frac{(q-k+j+1)^{\underline{j}}}{(q-1)^j}
\end{equation*}
which, after rearrangements of the terms, is equivalent to $R$ violating \eqref{eq:main_th}.
\end{proof}

\section{New upper bounds}

In this section, we provide, using the result obtained in Theorem~\ref{thm:main}, new upper bounds on the rate of linear $k$-hash codes in $\mathbb{F}_q^n$ for every $q \geq k \geq 3$.

Using the Plotkin bound for $q \geq 3$ we obtain the following corollary.
\begin{corollary}\label{cor:Plotkin}
    Let $C$ be a linear $k$-hash code in $\mathbb{F}_q^n$ of rate $R$. Then,
    \begin{equation}\label{eq:plot}
        R \leq \left(1 + \frac{q}{q-1} \sum_{i=1}^{k-2} \frac{(q-1)^{i}}{(q-2)^{\underline{i}}}\right)^{-1} + o(1)\,.
    \end{equation}
\end{corollary}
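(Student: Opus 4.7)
The plan is to combine Theorem~\ref{thm:main} directly with the asymptotic Plotkin bound for $q$-ary codes and then solve the resulting linear inequality for $R$. This is the exact pattern already used in Section~\ref{sec:k=q=3} to derive the bound $R\le 1/4$ from \eqref{eq:bound_R_delta}, but now with the general sum $S=\sum_{i=1}^{k-2}\frac{(q-1)^i}{(q-2)^{\underline{i}}}$ playing the role of the denominator $2$.

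First, I would recall that the asymptotic Plotkin bound for a $q$-ary code of rate $R$ and relative distance $\delta$ states
\begin{equation*}
\delta \leq \frac{q-1}{q}(1-R) + o(1).
\end{equation*}
Theorem~\ref{thm:main} applied to any linear $k$-hash code gives
\begin{equation*}
R \leq \frac{\delta}{S} + o(1),
\end{equation*}
so substituting the Plotkin bound on $\delta$ yields
\begin{equation*}
R \leq \frac{(q-1)(1-R)}{qS} + o(1).
\end{equation*}

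Then I would solve this linear inequality for $R$: rearrange to get $R\bigl(1 + \tfrac{q}{q-1}S\bigr) \leq 1 + o(1)$, which is exactly the claimed bound \eqref{eq:plot}. There is no real obstacle here; the statement is essentially a bookkeeping corollary in which Theorem~\ref{thm:main} supplies the bound $R\le \delta/S$ and Plotkin supplies the bound $\delta\le \tfrac{q-1}{q}(1-R)$, the two combining as a linear two-variable system. The only point worth mentioning explicitly is that since Theorem~\ref{thm:main} is asymptotic and the Plotkin bound is asymptotic, the $o(1)$ terms propagate unchanged, which is why the corollary is stated in asymptotic form.
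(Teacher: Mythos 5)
Your proposal is correct and follows the same route as the paper: apply the Plotkin bound $\delta\le\frac{q-1}{q}(1-R)+o(1)$ to the inequality $R\le\delta/S+o(1)$ from Theorem~\ref{thm:main} and solve the resulting linear inequality for $R$. The algebraic rearrangement to $R\bigl(1+\tfrac{q}{q-1}S\bigr)\le 1+o(1)$ is exactly what the paper does.
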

\begin{proof}
By the Plotkin bound we have that a code of length $n$ with relative minimum distance $\delta$ and rate $R$ satisfies, for $n$ large enough, the inequality $R \leq 1- \frac{q}{q-1} \delta$ which implies that
\begin{equation}\label{eq:plotkinB}
    \delta \leq \frac{q-1}{q} (1-R)\,.
\end{equation}
Now, using the upper bound on $\delta$ of equation~\eqref{eq:plotkinB} and the bound for linear $k$-hash codes given in Theorem~\ref{th:main_th} we obtain
$$
    R \leq \frac{q-1}{q} \frac{(1-R)}{\sum_{i=1}^{k-2} \frac{(q-1)^{i}}{(q-2)^{\underline{i}}}} + o(1)\,.
$$
Therefore, rearranging the terms we obtain the statement of the corollary.
\end{proof}

As done for the case $q = k = 3$ in Section~\ref{sec:k=q=3}, we can use the first linear programming bound of \cite{aaltonen1990new} to obtain the following corollary.

\begin{corollary}\label{cor:Aaltonen}
    Let $C$ be a linear $k$-hash code in $\mathbb{F}_q^n$ of rate $R$. Then,
    $$
        R \leq \frac{\delta^{*}}{\sum_{i=1}^{k-2}\frac{(q-1)^i}{(q-2)^{\underline{i}}}} + o(1)\,,
    $$
    where $\delta^{*}$ is the unique root of the following equation in $x$
    \begin{equation*}
        \frac{x}{\sum_{i=1}^{k-2}\frac{(q-1)^i}{(q-2)^{\underline{i}}}} = H_q\left(\frac{1}{q}\left(q - 1 - (q-2)x - 2\sqrt{(q-1)x(1 - x)}\right)\right)\,,
    \end{equation*}
    where $0 \leq x \leq \frac{q-1}{q}$ and $H_q$ is the function defined in equation~\eqref{eq:Hq}.
\end{corollary}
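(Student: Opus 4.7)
The strategy is to mirror exactly the argument used for $q=k=3$ in Section~\ref{sec:k=q=3} and reused for the Plotkin bound in Corollary~\ref{cor:Plotkin}: combine the distance-based inequality from Theorem~\ref{thm:main} with a sharp upper bound on the minimum distance of a $q$-ary linear code, namely the first linear programming bound of Aaltonen~\cite{aaltonen1990new} already recalled in equation~\eqref{eq:MRRW}. Thus I would not need any new combinatorial input; the work is to set up a simple implicit-function argument and verify uniqueness of the intersection.

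Let $S=\sum_{i=1}^{k-2}(q-1)^i/(q-2)^{\underline{i}}$, so that Theorem~\ref{th:main_th} reads $R\leq \delta/S + o(1)$. On the other hand, every linear code with rate $R$ and relative distance $\delta$ satisfies~\eqref{eq:MRRW}, i.e.
\[
R \leq H_q\!\left(\tfrac{1}{q}\bigl(q-1-(q-2)\delta-2\sqrt{(q-1)\delta(1-\delta)}\bigr)\right)+o(1).
\]
The plan is then to treat these as two constraints linking $R$ and $\delta$ and maximize $R$ over $\delta\in[0,(q-1)/q]$. The first constraint forces $\delta\geq SR$, while the second puts an upper bound on $R$ that decreases with $\delta$; hence the maximum of $R$ is attained when both constraints are tight, i.e.\ at the $\delta$ satisfying
\[
\frac{\delta}{S}=H_q\!\left(\tfrac{1}{q}\bigl(q-1-(q-2)\delta-2\sqrt{(q-1)\delta(1-\delta)}\bigr)\right),
\]
which is precisely the defining equation of $\delta^{\ast}$ in the statement. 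Substituting back yields $R\leq \delta^{\ast}/S+o(1)$, as required.

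The one non-cosmetic point to verify is that $\delta^{\ast}$ is genuinely unique on $[0,(q-1)/q]$, so that the implicit definition makes sense. For this I would show that the left-hand side $x/S$ is strictly increasing and vanishes at $x=0$, while the right-hand side $H_q(f(x))$ with $f(x)=\frac{1}{q}(q-1-(q-2)x-2\sqrt{(q-1)x(1-x)})$ is strictly decreasing on $[0,(q-1)/q]$, starting at $H_q((q-1)/q)=1$ and ending at $H_q(0)=0$. The strict monotonicity of $f$ is a short calculus check (the derivative of $(q-2)x+2\sqrt{(q-1)x(1-x)}$ is positive on that interval), and $H_q$ is increasing on $[0,(q-1)/q]$, so the composition is decreasing. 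A continuous, strictly increasing curve and a continuous, strictly decreasing curve on the same interval with suitably ordered endpoint values meet exactly once, which delivers the uniqueness of $\delta^{\ast}$.

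The main potential obstacle is purely book-keeping: making sure that the $o(1)$ terms coming from Theorem~\ref{th:main_th} and from the asymptotic form of the Aaltonen bound can be absorbed uniformly, and that the constraint $\delta\leq (q-1)/q$ (the natural domain of \eqref{eq:MRRW}) is automatically respected, which follows from the Plotkin bound already used in Corollary~\ref{cor:Plotkin}. Once these are observed, the proof is essentially a one-line composition of the two inequalities, entirely analogous to how \eqref{eq:lin_triff_best} was obtained in Section~\ref{sec:k=q=3}.
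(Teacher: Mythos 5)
Your proposal is correct and takes essentially the same route the paper does: the paper gives no separate proof of Corollary~\ref{cor:Aaltonen}, only the remark that it follows by combining Theorem~\ref{thm:main} with the first linear programming bound~\eqref{eq:MRRW} exactly as in the $q=k=3$ derivation of~\eqref{eq:lin_triff_best}. Your additional check that the two curves $x/S$ and $H_q(f(x))$ are respectively strictly increasing and strictly decreasing on $[0,(q-1)/q]$, so that $\delta^{*}$ is well-defined and unique, is a correct (and slightly more explicit) elaboration of what the paper leaves implicit.
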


In Table~\ref{Tab:PVA}, we compare the bounds provided in Corollaries~\ref{cor:Plotkin}, \ref{cor:Aaltonen} and the one given in \eqref{eq:kornerMarton} for $q \in [3, 64]$. It can be seen that the linear programming bound performs better for $q \leq 19$ while for $q \geq 23$ the Plotkin bound gives a better result.

\begin{table}[t!]
\caption{Upper bounds on the rate of linear $3$-hash codes in $\mathbb{F}_q^n$ for a prime power $q \in [3,64]$. All numbers are rounded upwards.}
\label{Tab:PVA}
\centering
\footnotesize
\setlength{\tabcolsep}{6pt} 
\renewcommand{\arraystretch}{1} 
\begin{tabular}{ |l|l|l|l| } 
\hline
 $q$ & Corollary \ref{cor:Plotkin} & Corollary \ref{cor:Aaltonen} & Equation \eqref{eq:kornerMarton} \\
 \hline
$3$ & $1/4 = 0.25$ & 0.2198 & 0.3691\\ 
$4$ & $1/3 = 0.\overline{3}$ & 0.3000 & $1/2 = 0.5$ \\
$5$ & $3/8 = 0.375$ & 0.3441 & 0.5694 \\ 
$7$ & $5/12 = 0.41\overline{6}$ & 0.3928 & 0.6438 \\ 
$8$ & $3/7 = 0.\overline{428571}$ & 0.4080 & $2/3 = 0.\overline{6}$\\ 
$9$ & $7/16 = 0.4375$ & 0.4200 & 0.6846\\
$11$ & $9/20 = 0.45$ & 0.4373 & 0.7110 \\ 
$13$ & $11/24 = 0.458\overline{3}$ & 0.4497 & 0.7298 \\ 
$16$ & $7/15 = 0.4\overline{6}$ & 0.4628 & $3/4 = 0.75$\\ 
$17$ & $15/32 = 0.46875$ & 0.4663 & 0.7554 \\
$19$ & $17/36 = 0.47\overline{2}$ & 0.4721 & 0.7646\\ 
$23$ & $21/44 = 0.477\overline{27}$ & 0.4811 & 0.7790\\
$25$ & $23/48 = 0.4791\overline{6}$ & 0.4846 & 0.7847\\
$27$ & $25/52 = 0.48\overline{076923}$ & 0.4877 & 0.7897\\
$29$ & $27/56 = 0.482\overline{142857}$ & 0.4903 & 0.7942\\
$31$ & $29/60 = 0.48\overline{3}$ & 0.4927 & 0.7982\\
$32$ & $15/31 \approx 0.483871$ & 0.4938 & $4/5 = 0.8$\\
$37$ & $35/72 = 0.486\overline{1}$ & 0.4984 & 0.8081\\
$41$ & $39/80 = 0.4875$ & 0.5013 & 0.8134\\
$\cdots$ & $\cdots$ & $\cdots$ & $\cdots$ \\
$64$ & $31/63 = 0.\overline{492063}$ & 0.5119 & $5/6 = 0.8\overline{3}$\\
 \hline
\end{tabular}
\end{table}

\begin{remark}
    We observe that one could use the second linear programming bound or the straight-line bounds given in \cite{aaltonen1990new, laihonen1998upper} to improve the results of Corollaries~\ref{cor:Plotkin} and \ref{cor:Aaltonen} for different values of $q$ and $k$. Here we avoid to show those improvements to keep a simpler presentation of our results.
\end{remark}

For a fixed value of $k$ and $q \to \infty$, both bounds given in Corollaries~\ref{cor:Plotkin} and \ref{cor:Aaltonen} converge to $1/(k-1)$, which is the same upper bound on the rate of $(q,k)$-hash codes (not necessarily linear) that one can derive from equation~\eqref{eq:blackWild}. Corollary~\ref{cor:Plotkin} approaches $1/(k-1)$ from below since the rhs of equation~\eqref{eq:plot} is strictly increasing in $q$ and since $\sum_{i=1}^{k-2} \frac{(q-1)^i}{(q-2)^{\underline{i}}} \geq k-2$, while Corollary~\ref{cor:Aaltonen} does not, see for example Table~\ref{Tab:PVA} where for $k=3$ and $q = 41, \ldots, 64$ we have upper bounds that exceed $1/2$.

We can compare the bound of Corollary~\ref{cor:Plotkin} with the one given in equation~\eqref{eq:kornerMarton} to obtain the following theorem.

\begin{theorem}\label{thm:comp}
    For every $q \geq k^2$ and $k \geq 4$, the bound of Corollary~\ref{cor:Plotkin} improves the one of K\"orner and Marton for general codes given in equation~\eqref{eq:kornerMarton}.
\end{theorem}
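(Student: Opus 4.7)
The plan is to sandwich $1/(k-1)$ between the two bounds: in the range $q \geq k^2$ and $k \geq 4$, I will show that Corollary~\ref{cor:Plotkin}'s bound is strictly less than $1/(k-1)$, while every summand
\[
g(j) := \frac{q^{\underline{j+1}}}{q^{j+1}} \log_q \frac{q-j}{k-j-1} \qquad (j = 0, 1, \ldots, k-2)
\]
of the minimum in equation~\eqref{eq:kornerMarton} strictly exceeds $1/(k-1)$. This immediately yields the claimed strict improvement.

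The first inequality is straightforward: each term $(q-1)^i/(q-2)^{\underline{i}}$ in the denominator of Corollary~\ref{cor:Plotkin} is at least $1$ (since $q-1 \geq q-j-1$ for every $j \geq 1$), so the sum appearing there is at least $k-2$. Combined with $q/(q-1) > 1$, this gives Corollary~\ref{cor:Plotkin}'s bound strictly less than $1/(1 + (k-2)) = 1/(k-1)$.

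For the second inequality, the Weierstrass product inequality gives
\[
\frac{q^{\underline{j+1}}}{q^{j+1}} = \prod_{i=0}^{j}\Bigl(1-\frac{i}{q}\Bigr) \geq 1 - \frac{j(j+1)}{2q} \geq 1 - \frac{(k-1)(k-2)}{2q},
\]
while the easy inequality $(q-j)(k-1) \geq q(k-j-1)$, valid for $q \geq k-1$ and $j \geq 0$, yields $\log_q \frac{q-j}{k-j-1} \geq \log_q \frac{q}{k-1} = 1 - \log_q(k-1)$. The problem therefore reduces to verifying
\[
(k-1)\Bigl(1 - \frac{(k-1)(k-2)}{2q}\Bigr)\bigl(1 - \log_q(k-1)\bigr) > 1 \quad \text{for } q \geq k^2,\ k \geq 4.
\]
Both factors in the product are increasing in $q$, so it suffices to check the inequality at $q = k^2$. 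For $k \geq 5$, the crude bounds $(k-1)(k-2)/(2q) \leq 1/2$ and $\log_q(k-1) \leq 1/2$ (both valid when $q \geq k^2$) make the left-hand side at least $(k-1)/4 \geq 1$.

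The main obstacle is the borderline case $k = 4$, where the crude estimate $(k-1)/4 = 3/4$ falls just short. Here one checks directly: at $q = 16$ one has $(k-1)(k-2)/(2q) = 3/16$ and $\log_{16} 3 < 2/5$, so the left-hand side is at least $3 \cdot (13/16) \cdot (3/5) > 1$, and monotonicity in $q$ disposes of all larger $q$. Otherwise the proof reduces to elementary manipulation of products and logarithms.
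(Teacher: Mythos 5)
Your proof is correct and follows the same high-level strategy as the paper's: show the left-hand side (Corollary~\ref{cor:Plotkin}) is strictly below $1/(k-1)$ because the denominator sum is at least $k-2$, and then show every term of the K\"orner--Marton minimum is at least $1/(k-1)$ by separately lower-bounding the product factor $q^{\underline{j+1}}/q^{j+1}$ and the logarithm factor. The technical difference is in the product factor: you apply the Weierstrass inequality directly to get $1 - j(j+1)/(2q)$, then coarsen both factors to $1/2$, which yields $(k-1)/4$ and therefore forces a separate direct check at $k=4$ (and leaves $k=5$ at the borderline, saved only because your intermediate inequalities are actually strict). The paper instead reduces to $j=k-2$, applies Bernoulli to get $1-(k-2)^2/q$, coarsens only the log factor to $1/2$, and verifies $\frac{1}{2}\bigl(1-(k-2)^2/q\bigr)\geq \frac{1}{k-1}$ for $q\geq k^2$, $k\geq 4$ directly. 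Retaining the $q$-dependence in the product factor is what lets the paper's version cover $k=4$ uniformly without a special case; beyond that, the two arguments are the same.
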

\begin{proof}
We need to show that
\begin{equation}\label{eq:comp}
    \left(1 + \frac{q}{q-1} \sum_{i=1}^{k-2} \frac{(q-1)^{i}}{(q-2)^{\underline{i}}}\right)^{-1}  < \min_{0\leq j\leq k-2} \frac{q^{\underline{j+1}}}{q^{j+1}} \log_q \left( \frac{q-j}{k-j-1} \right)\,.
\end{equation}
We lower bound the rhs of equation~\eqref{eq:comp} as follows
\begin{equation*}
    \min_{0\leq j\leq k-2} \frac{q^{\underline{j+1}}}{q^{j+1}} \log_q \left( \frac{q-j}{k-j-1} \right)
    \geq \frac{q^{\underline{k-1}}}{q^{k-1}} \log_q \left( \frac{q}{k-1} \right)
    \geq \frac{1}{2} \left( \frac{q-k+2}{q} \right)^{k-2}\,,
\end{equation*}
since the function $\log_x (\alpha x)$ is increasing in $x$ for $x \geq 2$ and $0 < \alpha < 1$ and since $k \geq 4$ and $q \geq k^2$. Then, by Bernoulli's inequality we have that
$$
    \frac{1}{2} \left( \frac{q-k+2}{q} \right)^{k-2} \geq \frac{1}{2}\left(1 - \frac{(k-2)^2}{q}\right)\,.
$$
Since the lhs of \eqref{eq:comp} is less than $1/(k-1)$, in order to prove the statement of the theorem we just need to show that
$$
    \frac{1}{k-1} \leq \frac{1}{2}\left(1 - \frac{(k-2)^2}{q}\right)\,,
$$
but this inequality is satisfied for ${q \geq k^2}$ and $k \geq 4$.
\end{proof}

\begin{figure}[!t]
\centering
\includegraphics[scale = 0.98]{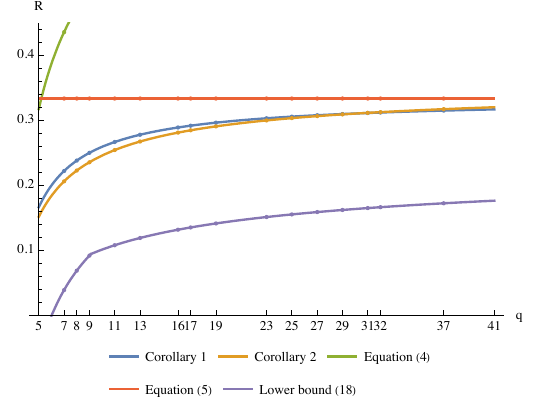}
\caption{Comparison between upper and lower bounds for $q \geq 5$ and $k=4$.}
\label{fig:comptriff}
\end{figure}

We conjecture that Theorem~\ref{thm:comp} still holds also if we relax the hypothesis to $q \geq 2k-3$ and $k\geq 3$. This would imply that, for all the interesting values of $q$ and $k$ (since the asymptotic rate of linear $k$-hash codes in $\mathbb{F}_q^n$ for $q \leq 2k-4$ is zero), our bound provides the best result.

In support of our conjecture, Table~\ref{Tab:PVA} provides an instance for $k=3$ where for every $q \geq 3$ the bound of Corollary~\ref{cor:Plotkin} improves the one of equation~\eqref{eq:kornerMarton} and Figure~\ref{fig:comptriff} reports the comparison between our bounds and the best known bounds in the literature for $k=4, 5$ and $q \geq 2k-3$. In addition, we have numerically verified the conjecture for every $k \in [3, 100]$ and $q \geq 2k-3$.

The authors in~\cite{bassalygo1997}, using classical random coding techniques, provide the following lower bound on the rate of linear $k$-hash codes in $\mathbb{F}_q^n$ for $q \geq \binom{k}{2}$:

\begin{equation}\label{eq:linlb}
    R \geq \min\bigg\{-\frac{1}{k-1} \log_q\left( 1 - \frac{q^{\underline{k}}}{q^k} \right),  \frac{1}{k-2} \left( 1-  \log_q \binom{k}{2} \right)\bigg\} + o(1)\,,
\end{equation}
where it can be seen that for $q$ sufficiently larger than $k$ the minimum of \eqref{eq:linlb} is achieved by the first term. This implies that for such values of $q$ and $k$, the lower bound coincides with the one for general codes given in equation \eqref{eq:lbfk}. However, there is still a large gap between upper and lower bounds. 

In Figure~\ref{fig:comptriff}, we compare our upper bounds and the lower bound of equation \eqref{eq:linlb} for $k=4,5$ and $q \geq 2k-3$. We note that both the upper bounds of Corollaries \ref{cor:Plotkin}, \ref{cor:Aaltonen} and the lower bound \eqref{eq:linlb} are asymptotically equal to $1/(k-1)$ for a fixed value of $k$ as $q \to \infty$.

\section*{Acknowledgements}
The authors would like to thank Lakshmi Prasad Natarajan for pointing out an error in the original derivation of the main result of this paper. Following his comments, the proof is now also simpler, and the result slightly stronger.
The authors would also like to thank Simone Costa for useful discussions on this topic.


\end{document}